\PassOptionsToPackage{table}{xcolor}
\documentclass[sigconf,screen]{acmart}
\renewcommand\footnotetextcopyrightpermission[1]{} 
\acmYear{2026}
\acmConference[Preprint]{Preprint}{May}{2026}

\usepackage[ruled,linesnumbered]{algorithm2e}
\usepackage{booktabs}
\usepackage{enumitem}
\usepackage{amsthm}

\usepackage{amssymb}
\usepackage{subcaption}

\newtheorem{theorem}{Theorem}
\newtheorem{proposition}[theorem]{Proposition}

\theoremstyle{definition}

\newtheorem{example}{Example}
\theoremstyle{remark}
\newtheorem{remark}[theorem]{Remark}

\begin{document}

\title{FLARE: One-Shot PE-Level Fault Localization in Systolic Arrays via Algebraic Test Vectors}

\author{Logashree Venkatasubramanian}
\affiliation{%
  \institution{Georgia Institute of Technology}
  \department{School of Electrical and Computer Engineering}
  \city{Atlanta}
  \state{Georgia}
  \country{USA}
}
\email{lvenkata7@gatech.edu}

\author{Zishen Wan}
\affiliation{%
  \institution{Georgia Institute of Technology}
  \department{School of Electrical and Computer Engineering}
  \city{Atlanta}
  \state{Georgia}
  \country{USA}
}
\email{zishenwan@gatech.edu}

\author{Viveck Cadambe}
\affiliation{%
  \institution{Georgia Institute of Technology}
  \department{School of Electrical and Computer Engineering}
  \city{Atlanta}
  \state{Georgia}
  \country{USA}
}
\email{viveck@gatech.edu}

\begin{abstract}
Systolic arrays are the dominant compute fabric for neural network inference. Prior work has addressed \emph{column-level} fault detection efficiently with uniform test patterns, but \emph{row-level} (PE-level) fault localization within a faulty column remains open without resorting to hardware redundancy. The fundamental obstacle is that uniform test inputs destroy per-row signatures: any test that activates every row equally cannot distinguish which row is the source of an observed deviation.

In this paper, we propose a lightweight, purely algorithmic remedy based on \emph{coprime test vectors}. By assigning pairwise coprime integers as test-input entries, a permanent weight-register fault produces a deviation whose divisibility signature uniquely identifies the faulty row. Under a general bounded error model, a single test pass localizes the faulty row with high probability. This error model covers a broader class of faults than what prior dataflow-aware testing work has primarily emphasized. When one round is insufficient, a second pass using a ratio computation achieves exact localization; for the special case of single-bit errors, odd coprime entries guarantee exact localization in one round.

For INT16 arithmetic, a single test pass covers array sizes up to $256{\times}256$ with localization probability above $0.98$, at a test cost under $1\%$ of one inference GEMM tile.
\end{abstract}



\keywords{Systolic arrays, Fault localization, Coprime test vectors, Permanent faults, Algorithm-Based Fault Tolerance (ABFT), Online fault diagnosis, AI accelerator reliability}

\maketitle

\section{Introduction}
\label{sec:intro}

Given the rapid growth of artificial intelligence workloads, particularly large language models, there has been a significant push toward designing power, performance, and area (PPA) efficient accelerators tailored to these workloads~\cite{jouppi2017tpu, deng2020model,qin2024mecla,wan2025cogsys}. These accelerators are increasingly deployed in safety critical and time sensitive applications, including autonomous driving systems, space missions, medical robotics, and other embedded platforms where incorrect predictions can have severe consequences. Consequently, ensuring the reliability of AI accelerators has become a critical research challenge. Industry-scale studies from Google, Meta, and Alibaba reveal that silent data corruption (SDC), hardware faults producing incorrect results without error flags, occur at significant rates across major computing fleets, prompting dedicated teams at these companies for fault detection and localization approaches~\cite{hochschild2021cores,dixit2021silent}. {Permanent faults from latent manufacturing defects are of particular concern: such defects can escape post-silicon screening yet cause persistent silent data corruption in deployment~\cite{hochschild2021cores,dixit2021silent,he2023understanding}. Studies show that even a single faulty PE can meaningfully degrade DNN inference accuracy~\cite{zhang2018analyzing}, making PE-level fault localization a critical prerequisite for targeted mitigation.}

Systolic arrays have emerged as a dominant architectural paradigm for accelerating matrix intensive workloads, particularly in machine learning and deep neural networks~\cite{kung1982systolic,raj2025scale}. By organizing computation around a regular grid of Processing Elements (PEs), systolic arrays enable high throughput and energy efficient matrix multiplication, which forms the computational backbone of modern AI models. Industry grade AI accelerators, such as Google's Tensor Processing Units (TPUs), are architected as large scale systolic arrays, highlighting the practical relevance of systolic array centric reliability challenges~\cite{jouppi2017tpu}.

A common approach to fault tolerance relies on computational or hardware redundancy. Classical Algorithm Based Fault Tolerance (ABFT), initially proposed by Huang and Abraham~\cite{huang1984abft}, appends row and column checksums to input matrices and verifies computed outputs against these checksums to detect errors. Notably, ABFT incurs compute and memory overhead for checksum generation and verification, and the recovery step (when present) requires additional passes over the output.  Hardware based approaches such as Double Modular Redundancy (DMR) or Triple Modular Redundancy (TMR) consume 100--200\% additional area~\cite{constantinescu2003trends}, directly undermining the PPA efficiency that makes systolic arrays attractive in the first place.

To preserve PPA while tolerating faults, recent work introduces spare PE rows or columns and reroutes computation around known faulty PEs~\cite{goldstein2021lightweight, bal2023novel, cherezova2025fortalesa, lee2024area, cho2020efficient, zhao2022fsa}. These mitigation strategies, whether rerouting computation or adapting weight mappings to avoid faulty PEs~\cite{ait2025algorithmic}, typically require the precise PE location as a prerequisite. Identifying the faulty PE requires taking the device offline and running diagnostic routines or BIST-based structural tests~\cite{hochschild2021cores, dixit2021silent, lee2023strait}, or relying on dedicated hardware checkers, probes, or scan chains (including more recent augmented-array designs~\cite{liu2025runtime, solangi2021test}) that impose area and power overhead even when no faults are present. Recent algorithmic alternatives~\cite{vacca2023runsafer, peltekis2025periodic} avoid this hardware cost by designing test patterns that expose faults during execution. However, they achieve only column-level fault detection using uniform test vectors, and cannot achieve \emph{PE-level localization} --- identifying which specific row within a faulty column contains the defective PE.

The core challenge is that systolic accumulation inherently destroys row-level identifiability: conventional test patterns collapse all row contributions into a single scalar, making localization impossible.  In this paper, we develop a framework that actively designs input test vectors whose algebraic structure survives this accumulation, so that fault location can be obtained from the structure of the resulting output deviation. Our work addresses a complementary limitation of prior dataflow-aware testing approaches by enabling PE-level localization for weight-register faults without hardware augmentation.
\vspace{2mm}

\noindent\textit{Summary of Contributions.} ~~This work introduces a novel algorithmic framework to achieve \emph{PE-level fault localization} in systolic arrays without hardware redundancy. We focus on permanent weight register (\textsf{Wreg}) faults in weight-stationary systolic arrays, the dominant dataflow in commercial AI accelerators where weights are retained locally within PEs to maximize data reuse and energy efficiency. Our key technical contribution is the development of carefully structured test vectors where fault deviations carry distinctive algebraic signatures that reveal both the location and magnitude of the underlying hardware defect. Our main contributions are as follows.

FLARE assumes at most one faulty weight register per column, with the error magnitude bounded within the representable precision range. Prior works such as RunSAFER~\cite{vacca2023runsafer} and Periodic Online Testing~\cite{peltekis2025periodic} validate only against single-bit stuck-at faults, whereas FLARE covers any error within this range, with single-bit faults as a special case.

\vspace{1mm}
\textbf{Single-round structured localization:} For an $L \times L$ systolic array performing $b$-bit integer computations, we develop a scheme where one test vector obtains the precise fault location with quantified probability, assuming at most one fault per column. Our framework uses pairwise coprime test vector entries to create unique divisibility signatures for each row. For INT16 computations, single-round localization covers typical accelerator array sizes ($L \le 256$) with probability of successful localization above $0.98$. Further, for permanent \emph{single-bit} faults, our approach achieves exact localization in a single round, exploiting the power-of-two structure of these errors. This single-round result is particularly significant for safety-critical deployments where strict time budgets are allocated for fault detection (e.g., ISO 26262 specifies fault tolerant time interval (FTTI) constraints of 100--256\,ms).

\vspace{1mm}
\textbf{Two-round extension for complete coverage:} When single-round localization is insufficient, a second test pass with a simple ratio computation achieves exact localization for all practical array sizes. This adapts the classical principle --- familiar from ABFT and syndrome-based error correction --- that two independent observations suffice to determine both the location and value of a single error.

\vspace{1mm}
\enlargethispage{\baselineskip}
Beyond localization, the framework estimates the magnitude of the hardware defect, enabling selective mitigation strategies where only significant faults affecting accuracy require correction. Thus, FLARE fills the critical gap between column-level detection and the localization requirements of fault-tolerant routing schemes, providing the precise faulty PE identification at run-time without hardware redundancy and minimal test overhead. We validate the analytical bounds through fault injection experiments on a custom cycle-accurate systolic array simulator using weights from \path{Qwen2.5-0.5B}; empirical failure rates are consistent with the theoretical predictions across all tested array dimensions. Cycle overhead measured via SCALE-Sim stays at most 2 cycles per test round, below $0.1\%$ of the cycles required for a single inference GEMM tile at $128{\times}128$ and $256{\times}256$ array sizes. We note that our results apply specifically to weight-register faults in weight-stationary dataflows; other fault types (e.g., input-register or MAC faults) exhibit different error structures and present a natural avenue for extending this framework.

\vspace{-2mm}
\paragraph{Paper outline.}
Section~\ref{sec:related} surveys prior work on ABFT, hardware redundancy, dataflow aware testing, silent data corruption, and DNN resilience under faults. Section~\ref{sec:model} states the system and fault models and presents the formal problem formulation. Section~\ref{sec:method} presents the FLARE one round and two round fault localization algorithms, along with formal theorems that bound the probability of successful localization as a function of array dimension and compute precision. Section~\ref{sec:inst} instantiates FLARE for INT8 and INT16, validates the analytical bounds through fault injection experiments on a custom systolic array simulator, and characterizes the cycle overhead using SCALE-Sim.

\section{Related Work}
\label{sec:related}

\subsection{Fault Detection and Correction}
Algorithm-Based Fault Tolerance (ABFT)~\cite{huang1984abft} augments computation with checksum redundancy to detect and correct errors at the output. Subsequent work reduces overhead and adapts ABFT to modern workloads, including efficient encoding for systolic arrays~\cite{libano2023efficient, braun2014abft, safarpour2021algorithm}, and approximate variants for quantized inference~\cite{xue2023approxabft} and layer-specific schemes for transformers~\cite{liu2024alberta, ma2023error}. Statistical methods have also been used to selectively protect vulnerable layers in LLM inference~\cite{xie2025realm}. These approaches are effective for transient faults, where correcting the output suffices.

ABFT operates at the output level and cannot localize faults to the PE level; hardware approaches achieve PE-level localization but incur area overhead. Our work bridges this gap with an algorithmic solution that delivers PE-level localization without hardware augmentation.

Arithmetic (AN) and residue codes~\cite{massey1972error,iacobovici2015residue} also introduce algebraic redundancy to detect computation errors via consistency checks. These approaches use coprime and modular structure with the goal of invariant checking, whereas our goal is fault localization. While AN and residue codes bear some high level resemblance with our test vector constructions, we see no direct technical connection between the two approaches.

\subsection{Dataflow-Aware Testing and Localization}
\label{sec:colvsrow}

An alternative to output-level detection is to exploit systolic dataflow to design test inputs that expose faults. RunSAFER~\cite{vacca2023runsafer} and Periodic Online Testing~\cite{peltekis2025periodic} inject structured (often uniform) test patterns to detect faults during execution, but achieve only coarse-grained localization at the column level. More broadly, prior work~\cite{tan2023saca, taheri2024saffira} emphasizes runtime detection mechanisms compatible with accelerator datapaths, similarly limited to column or module-level granularity.

This limitation arises from the accumulation inherent in weight-stationary dataflow. Each column computes $y_j = \sum_i w_{ij} x_i$,
so uniform inputs (e.g., all-ones) cause all rows to contribute equally. As a result, row-level information is lost, and the observed deviation indicates only that \emph{some} PE in the column is faulty. Achieving PE-level localization therefore requires test inputs that preserve distinct per-row signatures. This is precisely the gap addressed by our approach.

Offline structural testing, scan-chain-based diagnosis~\cite{lee2023strait, solangi2021test}, and hardware-augmented runtime approaches~\cite{liu2025runtime} can all achieve PE-level localization, but each requires dedicated hardware: for instance, HW-Runtime~\cite{liu2025runtime} augments the array with spare PEs, a control hub, and on-chip comparators, incurring approximately 1.74\% area overhead for a $256\times256$ array. ML-based frameworks such as DiagNNose~\cite{kundu2023diagnnose} offer architecture-specific fault attribution but require fault injection campaigns to generate training data. Across all these approaches, the focus has been limited to single-bit stuck-at faults~\cite{vacca2023runsafer, peltekis2025periodic, liu2025runtime}.

Single-bit faults produce an error that is always a power of two. {While the underlying detection mechanisms of previous works may be useful for multi-bit or arbitrary faults,  formal guarantees for such settings have not been established.} Importantly, real hardware faults may produce arbitrary bounded errors due to multi-bit defects, bridging faults, or parametric variation. In contrast to previous work, our framework explicitly considers such general faults via a general bounded error model ($|e| \le M$), subsuming single-bit faults as a special case. When the error is a single-bit fault, our method provides strictly stronger guarantees (exact localization in one round) as listed in Table~\ref{tab:comparison}. For weight-register faults, FLARE achieves the same localization granularity as HW-Runtime with no hardware overhead and fewer test runs under a more general fault model. In contrast to prior approaches, FLARE also detects the exact deviation of the weight register and can therefore potentially enable selective mitigation strategies (Sec.~\ref{sec:dnn_resilience}). As noted in the introduction, existing methods~\cite{vacca2023runsafer, peltekis2025periodic, liu2025runtime} target broader fault classes beyond weight-register faults, specifically input-register, MAC, and accumulator register faults. The task of advancing the key conceptual insight of FLARE  -- that carefully designing algebraic test vectors tailored to the fault model can provide efficient fault localization -- to models beyond weight-register faults is an open research direction motivated by our paper.

\begin{table}[t!]
\centering\footnotesize
\caption{Comparison of dataflow-aware fault testing approaches for weight-stationary systolic arrays. \vspace{-5pt}}\label{tab:comparison}
\renewcommand*{\arraystretch}{0.8}
\resizebox{\columnwidth}{!}{%
\begin{tabular}{lccc>{\columncolor{yellow!20}}c}
\toprule
 & \shortstack[c]{RunSAFER\\\cite{vacca2023runsafer}}
 & \shortstack[c]{Periodic Online\\Testing~\cite{peltekis2025periodic}}
 & \shortstack[c]{HW-Runtime\\\cite{liu2025runtime}}
 & \shortstack[c]{FLARE\\(this work)} \\
\midrule
\shortstack[l]{Fault model\\(validated)} & \shortstack[c]{single-bit\\stuck-at} & \shortstack[c]{single-bit\\stuck-at} & \shortstack[c]{single-bit\\stuck-at} & \shortstack[c]{bounded error\\(any $|e| \le M$)} \\[2pt]
\shortstack[l]{Test overhead\\(cycles)} & 2 & 4 & 4 $\times$ array-size & 1 or 2 \\[2pt]
Localization & PE col & PE col & PE row \& col & PE row \& col, $|e|$ \\[2pt]
Hw.\ modif. & no & no & yes & no \\
Error est. & no & no & no & yes \\
\bottomrule
\end{tabular}}
\vspace{-5mm}
\end{table}

\subsection{DNN Resilience and Fault Impact}
\label{sec:dnn_resilience}

Large-scale studies of silent data corruption (SDC)~\cite{hochschild2021cores,dixit2021silent} show that existing systems can detect that errors occur but typically cannot localize their origin. In systolic-array-based accelerators, prior work has also analyzed the impact of permanent faults on inference accuracy and proposed mitigation strategies such as retraining or redundancy~\cite{zhang2018analyzing}.

DNN inference exhibits inherent resilience to hardware faults, particularly for low-order bit errors~\cite{li2017understanding,hong2019terminal, hoang2020ft, ozen2020just}. The impact of a fault depends strongly on its magnitude and location: high-order bit errors (e.g., sign or MSB) can significantly degrade accuracy, while low-order perturbations often have negligible effect. Sensitivity also varies across layers and parameters~\cite{wan2023berry, chen2021low, ozen2020boosting}, motivating selective protection strategies that focus on the most critical components.

To quantify fault impact, the Parameter Vulnerability Factor (PVF)~\cite{jiao2024pvf} identifies which model parameters most affect inference accuracy when corrupted. However, PVF does not indicate where the fault originates in hardware. Low-cost runtime monitoring approaches \cite{geissler2023low} can detect such corruptions with minimal overhead, but do not localize the fault source. Our framework complements such analysis by localizing the faulty PE and estimating the error magnitude, enabling targeted mitigation that prioritizes high-impact faults while avoiding unnecessary correction.

\section{System Model and Fault Model}
\label{sec:model}

\subsection{Weight-Stationary Systolic Array}
\label{sec:systolic}

An $L \times K$ systolic array contains $L\cdot K$ processing elements (PEs) arranged in a grid.
PE$(i,j)$, with $i\in[L]$ and $j\in[K]$, holds three components:
\begin{enumerate}[leftmargin=*,nosep]
\item a \emph{weight register} (\textsf{Wreg}), storing a signed integer~$w_{ij}$;
\item an \emph{input register} (\textsf{Ireg}), receiving the signed integer activation~$x_i$ that flows from left to right across row~$i$;
\item a \emph{multiply-accumulate unit} (\textsf{MAC}), computing $w_{ij}\cdot x_i$ and adding to a partial sum flowing top-to-bottom in column~$j$.
\end{enumerate}
Given input $\mathbf{x}=(x_1,\ldots,x_L)^T$, column~$j$ produces
\begin{equation}\label{eq:output}
  y_j = \sum_{i=1}^{L} w_{ij}\,x_i,  \qquad j=1,\ldots,K.
\end{equation}
Both \textsf{Wreg} and \textsf{Ireg} operands are \emph{signed} $b$-bit integers in $\{-2^{b-1},\allowbreak \ldots, \allowbreak2^{b-1}-1\}$ (two's complement); the accumulator is wider (e.g.\ 32\,bits), so~\eqref{eq:output} incurs no overflow.
Throughout, ``INT$b$'' refers to the signed $b$-bit representation.

\subsection{Fault Model}
\label{sec:faultmodel}

We focus on permanent weight-register faults in weight-stationary systolic arrays. In this dataflow, each PE stores a weight value that persists for the duration of a tile computation. A permanent defect in the weight register produces a deviation with a multiplicative structure that our algebraic framework exploits.
Regardless of the physical fault mechanism, a permanent \textsf{Wreg} fault in PE$(i^\ast,j^\ast)$ can be modeled as an additive perturbation to the stored weight:
\begin{equation}\label{eq:deviation}
  \tilde{w}_{i^\ast j^\ast} = w_{i^\ast j^\ast} + e, \qquad
  \Delta_{j^\ast} \triangleq \tilde{y}_{j^\ast} - y_{j^\ast} = e\cdot x_{i^\ast},
\end{equation}
with $\Delta_j=0$ for $j\neq j^\ast$.%
\footnote{A stuck-at fault produces $e\neq 0$ only when the loaded weight disagrees with the stuck bit; otherwise the fault is \emph{masked}. Ensuring activation (i.e., $e\neq 0$) is a standard test-engineering concern addressed by loading complementary weight patterns such as all-zeros and all-ones. Our framework assumes the fault is activated ($e\neq 0$) and focuses on localizing the faulty row from the observed deviation.}
We analyze two error models for $e$.

\begin{figure}[t]
    \centering
    \includegraphics[width=0.7\linewidth]{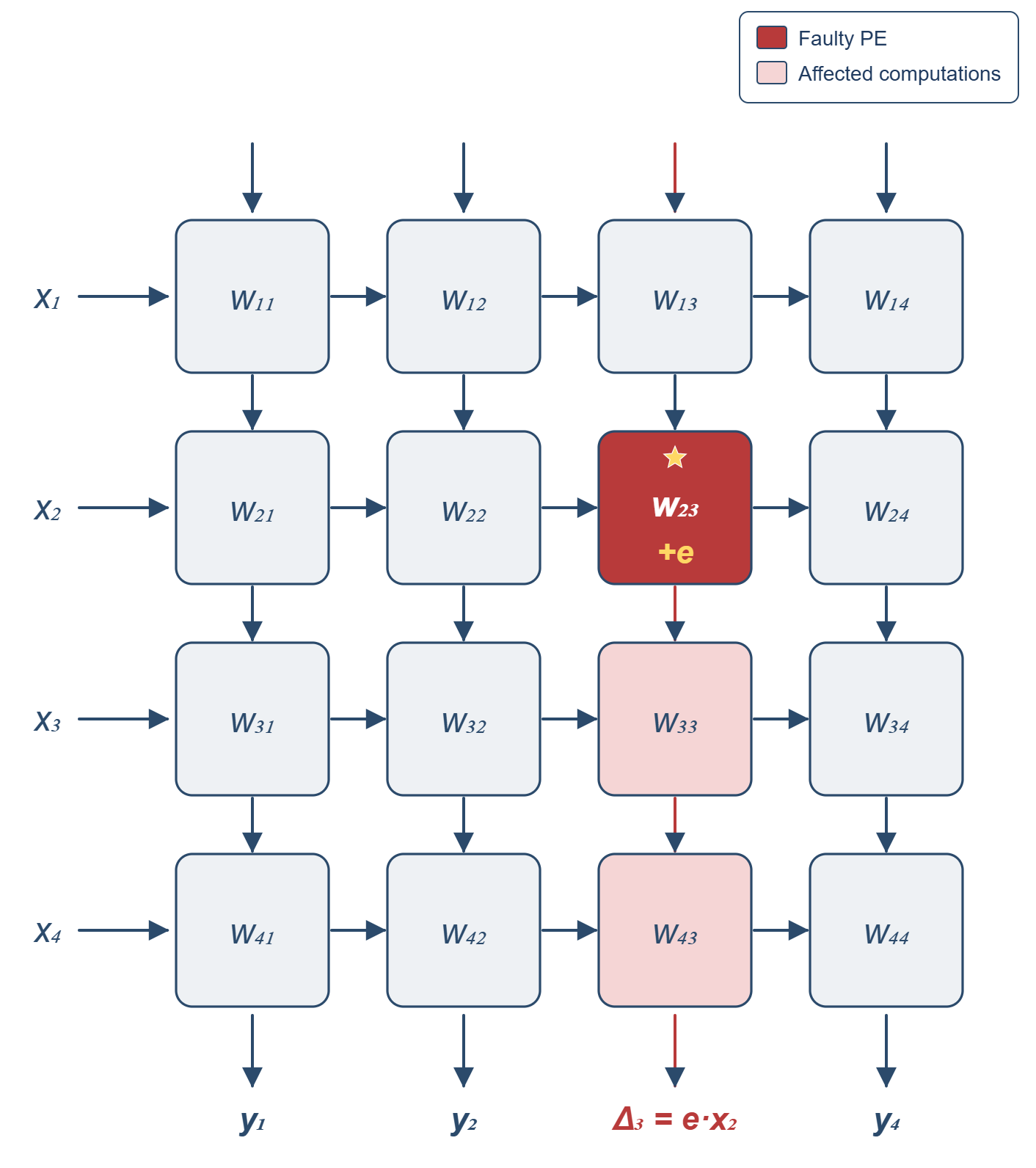}
    \caption{Error propagation pattern for a \textsf{Wreg} fault in a $4{\times}4$ weight-stationary systolic array, where a fault adds a persistent error $e$ that corrupts every MAC involving that PE.}
    \label{fig:wreg_fault}
    \vspace{-5mm}
\end{figure}

\paragraph{Bounded error model.}
For general faults (multi-bit defects, bridging faults, parametric drift, or silent data corruption) we make no assumption on the error value beyond a magnitude bound:
\begin{equation}\label{eq:bounded}
  e \;\sim\; \mathrm{Uniform}\bigl(\{1,\ldots,M\}\bigr), \qquad M = 2^{b-1}-1.
\end{equation}
This is the primary model under which we state all main results.

\paragraph{Single-bit-error model}
We also analyze our algorithms under a structured special case of the bounded error model: a permanent defect that forces bit~$\beta$ of the weight register to a fixed value $v\in\{0,1\}$, regardless of the data written, produces an error
\begin{equation}\label{eq:biterror}
 e = \pm\,2^{\beta},
\end{equation}
a signed power of two, whenever the intended weight has bit~$\beta$ equal to $1{-}v$; otherwise the fault is \emph{masked} ($e=0$).
Because single-bit errors have the restricted form $e=\pm 2^\beta$, our coprime scheme achieves strictly stronger guarantees than under the general bounded error model; we highlight these as remarks following the main theorems.

We state the main results under the bounded error model: Theorem~\ref{thm:avg} gives a one round probabilistic guarantee, and Section~\ref{sec:gcd} presents a two round ratio-based scheme for exact localization when one round is insufficient.
For the single-bit-error model, Remark~\ref{rem:singlebit} notes that exact localization in a single round is achievable.

\paragraph{Assumptions.}
\begin{enumerate}[label=\textbf{(A\arabic*)},leftmargin=*,nosep]
\item\label{A:single} At most one PE per column has a permanent fault.
\item\label{A:bounded} $1\le|e|\le M=2^{b-1}-1$.
\item\label{A:column} The faulty column is identified by $\Delta_{j^\ast}\neq 0$.
\end{enumerate}

Notably, we restrict attention to \emph{weight-register} faults, which result in the multiplicative structure \eqref{eq:deviation} that enables the algebraic localization techniques of our paper. 

\subsection{Problem Statement}
\label{sec:problem}

Consider a test phase in which a known weight matrix~$W$ has been loaded into the array and a fixed test input vector~$\mathbf{x}$ is applied.
The expected output $\mathbf{y} = W^T\mathbf{x}$ is precomputed offline.
If a permanent \textsf{Wreg} fault corrupts PE$(i^\ast,j^\ast)$, the observed output~$\hat{\mathbf{y}}$ differs from~$\mathbf{y}$ in column~$j^\ast$ by $\Delta_{j^\ast} = e\cdot x_{i^\ast}$ (Eq.~\ref{eq:deviation}), while all other columns match exactly.
\emph{Detecting} that a fault has occurred is therefore straightforward: any $\hat{y}_j \neq y_j$ signals a faulty column.

The harder problem and the focus of this paper is \emph{localization}: identifying \emph{which row}~$i^\ast$ within the faulty column contains the defective PE, and recovering the error value~$e$.

Since each column computes an independent inner product~\eqref{eq:output} and a \textsf{Wreg} fault affects only column~$j^\ast$, the localization problem decomposes across columns.
It therefore suffices to consider a single column with weight vector $\mathbf{w} = (w_1,\ldots,w_L)^T$, expected output $y = \mathbf{w}^T\mathbf{x}$, and observed output $\hat{y} = y + e\cdot x_{i^\ast}$.
We adopt this single-column notation for the remainder.
Formally:

\vspace{1mm}
\noindent
\textbf{Row Localization Problem.}
Design a small set of test input vectors $\mathbf{x}^{(1)},\ldots,\mathbf{x}^{(R)} \in \{-M,\ldots,M\}^L$, each with entries $\mathbf{x}^{(r)} = (x_1^{(r)},\ldots,x_L^{(r)})^T$, such that from the observed deviations $\Delta^{(r)} = e\cdot x_{i^\ast}^{(r)}$ ($r=1,\ldots,R$), the faulty row~$i^\ast$ and error~$e$ can be recovered exactly or with miss probability bounded by Theorem~\ref{thm:avg}. Table~\ref{tab:notation} lists notation.

\begin{table}[t]
\centering\small
\caption{Notation.\vspace{-10pt}}\label{tab:notation}
\begin{tabular}{@{}c@{\quad}p{0.62\columnwidth}@{}}
\toprule
$L,\,K$ & rows, columns of systolic array \\
$b$ & operand bit-width \\
$M$ & $2^{b-1}-1$; max signed magnitude \\
$e$ & additive error from \textsf{Wreg} fault \\
$\Delta_j$ & output deviation of column~$j$ \\
$\mathbf{x}=(x_1,\ldots,x_L)^T$ & test vector; entries pairwise coprime \\
$\pi(M)$ & Prime Counting Function: The number of prime numbers smaller than or equal to $M$ \\
$P_{\mathrm{fail}}$ & probability that localization is incomplete after one round \\
\bottomrule
\end{tabular}
\vspace{-5mm}
\end{table}

\section{Coprime Test-Vector Fault Localization}
\label{sec:method}

The scheme developed in this section is organized around a central idea: encoding fault location into arithmetic structure so that a small number of test passes recovers the faulty row from the observed deviation. We begin with a worked example (Section~\ref{sec:example}) and formalize the general framework (Section~\ref{sec:framework}). Section~\ref{sec:single} presents our main result: a \emph{single-round} construction based on pairwise coprime divisibility signatures, with a precise bound on the probability of successful localization (Theorem~\ref{thm:avg}). Section~\ref{sec:gcd} then develops a complementary two round construction that achieves exact localization. The two round construction serves two purposes: it provides exact localization as a fallback when one round is insufficient, and it offers a reference point against which the single-round result can be contrasted. Section~\ref{sec:overhead} quantifies the test overhead for both schemes.

\subsection{Motivating Example}
\label{sec:example}

Before stating the general framework, we trace through a small instance that illustrates every step.

\begin{example}[Signed INT4, $L=4$]\label{ex:int4}
Consider a single column with $L=4$ rows, signed $b=4$-bit operands ($M=2^{b-1}-1=7$), and weight vector $\mathbf{w}=(w_1,w_2,w_3,w_4)^T$.
Assign pairwise coprime test-vector entries, here the four primes $\le 7$:
$\mathbf{x} = (7,\;5,\;3,\;2)^T$.
The expected output is
$y = 7w_1 + 5w_2 + 3w_3 + 2w_4$.

\smallskip\noindent\textbf{Symbolic deviation.}\;
If row~$i^\ast=2$ has a \textsf{Wreg} fault with error~$e$, the observed output is
\[
  \hat{y} = 7w_1 + 5(w_2+e) + 3w_3 + 2w_4 = y + 5e,
\]
giving deviation $\Delta = \hat{y} - y = 5e = e\cdot p_2$.

\smallskip\noindent\textbf{Concrete instance.}\;
Load the all-ones test weight pattern $\mathbf{w}=(7,7,7,7)^T$, so $y = 7(7{+}5{+}3{+}2) = 119$.
A single-bit error on bit~2 of $w_2$ (bit forced to~$0$) changes $w_2 = 7 = 0\mathtt{b}0111$ to $0\mathtt{b}0011 = 3$, giving $e = 3-7 = -4$.
The observed output is $\hat{y} = 119 + 5\cdot(-4) = 99$, so $\Delta = 99 - 119 = -20$.

\smallskip\noindent\textbf{Localization.}\;
For each row~$k$, test whether $x_k\mid\Delta$ and check the implied error magnitude{, which must satisfy $|e|\le M = 7$ since operands are limited to $b=4$-bit precision}:
\begin{center}\small
\begin{tabular}{ccccc}
\toprule
Row $k$ & $x_k$ & $x_k\mid{-}20$? & $\Delta/x_k$ & $|\Delta/x_k|\le 7$? \\
\midrule
1 & 7 & No  & ---   & --- \\
2 & 5 & Yes & $-4$  & Yes $\checkmark$ \\
3 & 3 & No  & ---   & --- \\
4 & 2 & Yes & $-10$ & No \\
\bottomrule
\end{tabular}
\end{center}
Row~2 is the unique candidate: the scheme correctly localizes the fault and recovers $e=-4$.
Row~4 passes the divisibility test ($2\mid 20$) but is eliminated by the magnitude check ($|-10|>7$).
\end{example}

\noindent
This example illustrates that localization reduces to identifying which test vector entry divides the residual while respecting magnitude constraint. Two features are worth noting:
\vspace{1mm}
\begin{enumerate}[leftmargin=*,nosep]
\item \textbf{Why pairwise coprime?}  The deviation $\Delta = e\cdot c_{i^\ast}$ is a product of two factors.  Because the entries are pairwise coprime ($\gcd(c_k, c_{i^\ast})=1$ for $k\neq i^\ast$), a ``wrong'' entry $c_k$ can divide~$\Delta$ only if it divides the \emph{error}~$e$ itself. That is, row~$k$ survives the divisibility test only if $c_k$ divides $e$. Depending on the error model, we can bound the likelihood of the error $e$ being divisible by the wrong entry, which leads to the likelihood of localization with one test vector.
\item \textbf{Power-of-two errors.}  Single-bit errors produce errors $e=\pm 2^\beta$, which are powers of two.  No \emph{odd} integer coprime to~$2$ divides a power of two, so if we restrict to odd coprime entries, a single-bit error is localized \emph{with certainty} in a single round.  We formalize this below.
\end{enumerate}

\subsection{General Framework}
\label{sec:framework}

The test vector $\mathbf{x} = (x_1,\ldots,x_L)^T$ has entries that are \emph{pairwise coprime}: $\gcd(x_i,x_j)=1$ for all $i\neq j$.
Recall from Section~\ref{sec:problem} that we work with a single column: a fault in row~$i^\ast$ with error~$e$ produces deviation $\Delta = e\,x_{i^\ast}$.
The localization procedure checks, for each candidate row~$k$, whether $x_k \mid \Delta$.
If row~$k \neq i^\ast$ also passes this test, localization is incomplete: multiple candidate rows survive.

\begin{proposition}[Coprime divisibility test]\label{prop:div}
Let $\mathbf{x}$ have pairwise coprime entries $x_1,\ldots,x_L$.
Suppose the fault is in row~$i^\ast$ with error~$e \neq 0$.
Then row $k \neq i^\ast$ survives the divisibility test if and only if $x_k \mid e$.
In particular, if $e$ is not divisible by $x_k$ for any $k \neq i^\ast$, then $i^\ast$ is the unique survivor and localization is complete.
\end{proposition}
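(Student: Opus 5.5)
The plan is to reduce everything to elementary divisibility, using the deviation formula $\Delta = e\,x_{i^\ast}$ from Eq.~\eqref{eq:deviation}, and to treat the two directions of the ``if and only if'' separately. Fix a row $k\neq i^\ast$. By the definition of the test, row $k$ survives exactly when $x_k\mid e\,x_{i^\ast}$, so the claim amounts to showing $x_k\mid e\,x_{i^\ast}\iff x_k\mid e$.

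\emph{Sufficiency.} If $x_k\mid e$, then $x_k$ divides every integer multiple of $e$, in particular $e\,x_{i^\ast}=\Delta$. Row $k$ therefore survives. This direction does not use coprimality.

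\emph{Necessity.} Suppose $x_k\mid e\,x_{i^\ast}$. By the pairwise coprimality assumption, $\gcd(x_k,x_{i^\ast})=1$. Euclid's lemma in its general form (if $a\mid bc$ and $\gcd(a,c)=1$ then $a\mid b$) then gives $x_k\mid e$. To keep the argument self-contained, I would derive this from Bézout: pick integers $u,v$ with $u x_k+v x_{i^\ast}=1$ and multiply by $e$ to get
\[
e = u\,e\,x_k + v\,e\,x_{i^\ast}.
\]
Both terms on the right are divisible by $x_k$, so $x_k\mid e$.

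\emph{Signs and degenerate entries.} The test-vector entries are signed integers, so I would note that divisibility and $\gcd$ are insensitive to sign. I would also note that $x_k\neq 0$ is implicit: $\gcd(0,x_{i^\ast})=|x_{i^\ast}|$ would equal $1$ only if $x_{i^\ast}=\pm1$, and in any case an entry of $0$ would make the test meaningless. I would state this as a standing convention rather than treat it as a real obstacle. The only genuine mathematical step is the appeal to Euclid's lemma.

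\emph{The ``in particular'' clause.} Row $i^\ast$ always survives, since $x_{i^\ast}\mid e\,x_{i^\ast}$. If no $x_k$ with $k\neq i^\ast$ divides $e$, then by the equivalence just proved every other row fails the test. Hence $i^\ast$ is the unique survivor, localization is complete, and $e=\Delta/x_{i^\ast}$ is recovered as a byproduct.
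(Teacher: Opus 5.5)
Your proposal is correct and follows the same route as the paper: it reduces survival of row $k$ to $x_k \mid e\,x_{i^\ast}$ and then uses $\gcd(x_k,x_{i^\ast})=1$ (Euclid's lemma) to get $x_k \mid e$. The Bézout derivation, the sign and zero-entry remarks, and the explicit check that row $i^\ast$ always survives only spell out what the paper's two-line proof leaves implicit.
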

\begin{proof}
Row~$k$ survives the divisibility test iff $x_k \mid \Delta = e\,x_{i^\ast}$.
Since $\gcd(x_k,\,x_{i^\ast})=1$, this holds if and only if $x_k\mid e$.
\end{proof}

\noindent
Algorithm~\ref{alg:single} additionally checks the magnitude condition $|\Delta/x_k| \le M$, which can eliminate candidates that pass the divisibility test but imply an error exceeding the representable range (as seen for row~4 in Example~\ref{ex:int4}).%
\footnote{The analysis throughout uses only the divisibility condition $x_k \mid e$ and ignores the magnitude check. Since the magnitude check can only eliminate candidates, the stated bounds are conservative; a refined analysis incorporating this condition can only improve them.}

A key design implication of Proposition~\ref{prop:div}: since row~$k$ survives the divisibility test only if $x_k \mid e$, and the number of multiples of~$x_k$ in $\{1,\ldots,M\}$ is $\lfloor M/x_k \rfloor$, \emph{larger entries reduce the number of surviving candidates}.
For example, an entry $x_k = 113$ has only $\lfloor 127/113\rfloor = 1$ multiple in $\{1,\ldots,127\}$, while $x_k = 2$ has $63$ multiples.
This motivates choosing entries as large as possible.

An important constraint on the test vector length follows from pairwise coprimality.
If $x_1,\ldots,x_L$ are pairwise coprime integers with $x_k\le M$ for all~$k$, then each $x_k$ must contain at least one prime factor not shared with any other entry, and all such prime factors are $\le M$.
Therefore $L \le \pi(M)$, where $\pi(\cdot)$ denotes the prime-counting function (the number of primes up to its argument). 
The constraint $L \leq \pi(M)$ highlights a fundamental tradeoff: fault localization for larger arrays requires either additional probes or larger input magnitudes.

\subsection{One-Round Localization ($L \le \pi(M)$)}
\label{sec:single}

When $L \le \pi(M)$, a single test vector with $L$ pairwise coprime entries suffices.
To maximize $\min_k x_k$, we choose the $L$ largest primes $\le M$, though any set of $L$ pairwise coprime integers $\le M$ would work.
Algorithm~\ref{alg:single} gives the procedure.

\begin{algorithm}[t]
\caption{Single-Vector Coprime Localization}\label{alg:single}
\small
\KwIn{$L,K,M$; weight matrix~$W$}
\KwOut{candidate set $\mathcal{C}_j$ for each faulty column~$j$}
\BlankLine
Set $x_k \leftarrow$ the $k$-th largest prime $\le M$, for $k=1,\ldots,L$\;
$\mathbf{x}\leftarrow(x_1,\ldots,x_L)^T$\;
\ForEach{column $j\in[K]$}{
  $y_j^{\mathrm{exp}} \leftarrow \sum_{i} w_{ij}\,x_i$\;
}
Stream $\mathbf{x}$ through the systolic array; observe $y_j^{\mathrm{obs}}$\;
\ForEach{column $j$}{
  $\Delta_j\leftarrow y_j^{\mathrm{obs}}-y_j^{\mathrm{exp}}$\;
  \If{$\Delta_j\neq 0$}{
    $\mathcal{C}_j\leftarrow\{(k,\,\Delta_j/x_k): x_k\mid\Delta_j,\;|\Delta_j/x_k|\le M\}$\;
  }
}
\end{algorithm}

We now state the localization guarantees.

\begin{theorem}[Probability that localization is incomplete after one round]\label{thm:avg}
Let $L \le \pi(M)$ and assign pairwise coprime entries $x_1, \ldots, x_L \le M$.
Under the bounded error model with $e$ uniform on $\{1,\ldots,M\}$, the average probability that localization is incomplete after one round satisfies
\begin{equation}\label{eq:avgfp}
  P_{\mathrm{fail}}
  \;\le\; \frac{L-1}{LM}\sum_{k=1}^{L}\bigl\lfloor M/x_k\bigr\rfloor.
\end{equation}
When all entries exceed $M/2$, each $\lfloor M/x_k\rfloor = 1$ and this simplifies to $P_{\mathrm{fail}} \le (L{-}1)/M$.
\end{theorem}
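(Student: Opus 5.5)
The approach is to average the failure probability over the faulty row $i^\ast$, taken uniform on $[L]$ (this is the ``average'' in the statement), and over $e$ uniform on $\{1,\ldots,M\}$. We then reduce to divisibility events via Proposition~\ref{prop:div} and close with a union bound. Throughout we ignore the magnitude check of Algorithm~\ref{alg:single}, since it can only remove candidates; the resulting bound is therefore conservative, as noted in the footnote.

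\textbf{Step 1 (fixed faulty row).} Fix $i^\ast$. By Proposition~\ref{prop:div}, localization is incomplete exactly when some $k\neq i^\ast$ satisfies $x_k\mid e$. A union bound gives
\[
\Pr[\text{fail}\mid i^\ast] \le \sum_{k\neq i^\ast}\Pr[x_k\mid e] = \sum_{k\neq i^\ast}\frac{\lfloor M/x_k\rfloor}{M},
\]
because $\{1,\ldots,M\}$ contains exactly $\lfloor M/x_k\rfloor$ multiples of $x_k$.

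\textbf{Step 2 (average over rows).} Averaging over $i^\ast$ uniform on $[L]$ gives
\[
P_{\mathrm{fail}} \le \frac{1}{LM}\sum_{i^\ast=1}^{L}\sum_{k\neq i^\ast}\lfloor M/x_k\rfloor.
\]
We then swap the order of summation. Each index $k$ appears in the inner sum for exactly $L-1$ choices of $i^\ast$, so the double sum equals $(L-1)\sum_{k}\lfloor M/x_k\rfloor$. This yields \eqref{eq:avgfp}.

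\textbf{Step 3 (special case).} If every $x_k>M/2$, then $1\le M/x_k<2$ because $x_k\le M$, so $\lfloor M/x_k\rfloor=1$. The sum becomes $L$, and the bound simplifies to $(L-1)/M$.

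The only delicate point is interpretive. ``Average probability'' must be read as averaging over a uniformly random faulty row. If a worst-case row were intended instead, Step 1 alone gives the slightly different bound $\max_{i^\ast}\sum_{k\neq i^\ast}\lfloor M/x_k\rfloor/M$. Otherwise the argument is a direct union bound combined with a counting swap, and I expect no real obstacle.
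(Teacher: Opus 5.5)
Your proposal is correct and follows the paper's proof essentially step for step. Proposition~\ref{prop:div} reduces survival of a wrong row $k$ to $x_k \mid e$, a union bound with the count $\lfloor M/x_k\rfloor$ handles a fixed $i^\ast$, and averaging over a uniform $i^\ast$, with each $k$ appearing in exactly $L-1$ inner sums, yields \eqref{eq:avgfp}; your explicit check that $x_k > M/2$ forces $\lfloor M/x_k\rfloor = 1$ and your remark that ignoring the magnitude test keeps the bound conservative only spell out what the paper leaves to its statement and footnote.
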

\begin{proof}
By Proposition~\ref{prop:div}, row $k \neq i^\ast$ survives the divisibility test iff $x_k \mid e$.
The number of $e \in \{1,\ldots,M\}$ divisible by $x_k$ is $\lfloor M/x_k \rfloor$, so $\Pr(x_k \mid e) = \lfloor M/x_k \rfloor / M$.
By the union bound, $P_{\mathrm{fail}}(i^\ast) \le \frac{1}{M}\sum_{k \neq i^\ast} \lfloor M/x_k \rfloor$.
Averaging over $i^\ast$: each term $\lfloor M/x_k \rfloor$ appears in exactly $L{-}1$ of the $L$ inner sums (excluded only when $i^\ast = k$), giving $P_{\mathrm{fail}} \le \frac{L-1}{LM}\sum_{k=1}^{L}\lfloor M/x_k\rfloor$.
\end{proof}

Theorem~\ref{thm:avg} has two practical implications. First, it reinforces the design principle from Proposition~\ref{prop:div}: each term $\lfloor M/x_k\rfloor$ shrinks as $x_k$ grows, so choosing the largest available pairwise coprime entries minimizes the bound. Second, the bound improves sharply with operand bit-width. As $b$ (equivalently $M$) grows, the prime pool $\pi(M)$ expands, and for any fixed array size $L$ one can eventually select $L$ pairwise coprime entries all exceeding $M/2$. In this regime the bound collapses to the simplified form $P_{\mathrm{fail}} \le (L-1)/M$, which vanishes as $M \to \infty$. We explore this in Section~\ref{sec:inst} where we instantiate our results for INT8 and INT16.

\begin{remark}[Single-bit errors: exact single-round localization]\label{rem:singlebit}
Under the single-bit-error model ($e = \pm 2^\beta$), if all entries $x_k$ are chosen as the $L$ largest odd primes $\le M$ (requiring $L \le \pi(M)-1$), then $x_k \nmid e$ for every $k \neq i^\ast$, since $|e|$ is a power of two and every $x_k \ge 3$ is odd.
By Proposition~\ref{prop:div}, no other row survives the divisibility test and the faulty row is identified exactly with a single test vector.
\end{remark}

\subsection{Ratio-Based Two-Round Localization}
\label{sec:gcd}

The one round scheme of Section~\ref{sec:single} leaves two cases unresolved. First, for $L \le \pi(M)$, round~1 localizes with miss probability bounded by Theorem~\ref{thm:avg} but not with certainty: a fraction of faults produce errors $e$ that are not localized in one round. Second, for $L > \pi(M)$, a single test vector cannot have all entries pairwise coprime, and one round localization is not directly applicable. We now present a two round scheme that resolves both cases. It serves as a \emph{fallback} for round-1 failures when $L \le \pi(M)$ and as the \emph{primary} localization method when $L > \pi(M)$.

Given two test vectors $\mathbf{x}^{(1)}$ and $\mathbf{x}^{(2)}$ with row entries $x_k^{(1)}$ and $x_k^{(2)}$, a fault in row $i^\ast$ with error $e$ produces deviations
\begin{equation}
  \Delta^{(1)} = e \cdot x_{i^\ast}^{(1)},\qquad
  \Delta^{(2)} = e \cdot x_{i^\ast}^{(2)}.
\end{equation}
Taking the ratio cancels the error magnitude entirely:
\begin{equation}\label{eq:ratio}
  \frac{\Delta^{(2)}}{\Delta^{(1)}} \;=\; \frac{x_{i^\ast}^{(2)}}{x_{i^\ast}^{(1)}}.
\end{equation}
The key idea of the two round localization algorithm is to design test vectors so that every row has a distinct reduced ratio $r_k = x_k^{(2)}/x_k^{(1)}$, the observed ratio identifies the faulty row unambiguously. The error is then recovered by a single integer division, $e = \Delta^{(1)}/x_{i^\ast}^{(1)}$.

\paragraph{Feasibility constraint.}
Identification by ratio requires the per-row reduced ratios $r_k$ to be pairwise distinct. Since both $x_k^{(1)}$ and $x_k^{(2)}$ live in $\{1,\ldots,M\}$, the number of representable distinct ratios is finite. The set of admissible reduced ratios is
\begin{equation}\label{eq:farey}
  \mathcal{F}_M \;=\; \{p/q : 1 \le p,q \le M,\;\gcd(p,q)=1\},
\end{equation}
the set of all distinct reduced fractions with numerator and denominator in $\{1,\ldots,M\}$.%
\footnote{This set is closely related to the Farey sequence of order~$M$. Its cardinality equals the number of coprime pairs $(p,q)$ with $1 \le p,q \le M$, which is $2\sum_{q=1}^{M}\phi(q) - 1$ where $\phi$ is the Euler totient function. A classical result in number theory~\cite{hardywright2008} gives $\sum_{q=1}^{M}\phi(q) \sim 3M^2/\pi^2$, yielding $|\mathcal{F}_M| \approx 6M^2/\pi^2$.}
The ratio scheme therefore supports
\begin{equation}\label{eq:fareybound}
  L \;\le\; |\mathcal{F}_M| \;\approx\; \frac{6}{\pi^2}\,M^2.
\end{equation}
For INT8 this gives about 9800; for INT16, on the order of $6.5\times 10^8$. This is a theoretical upper bound; the two instantiations presented below do not saturate it but cover all array sizes of practical interest.

\noindent
The principle that two independent observations suffice to determine one error's location and value is classical, appearing in ABFT checksum schemes~\cite{huang1984abft} and in syndrome-based error correction~\cite{roth2006coding}. Its application to PE-level fault localization in systolic arrays, along with the implied bounds \eqref{eq:fareybound}, appears to be new. We present two useful instantiations: one for the case where it is used as a fallback for small array sizes to complement the one round test, and another for larger array sizes.

\paragraph{Instantiation 1: Derangement of primes (fallback for $L \le \pi(M)$).}Let $\mathbf{x}^{(1)} = (x_1, x_2, \ldots, x_L)$ be the round-1 test vector, whose entries are pairwise coprime. Let $\sigma$ be a derangement of $\{1,\ldots,L\}$ (a permutation with no fixed points, $\sigma(k) \ne k$), and set $\mathbf{x}^{(2)} = (x_{\sigma(1)}, x_{\sigma(2)}, \ldots, x_{\sigma(L)})$. Derangements exist for all $L \ge 2$, so this construction covers any $L \in [2, \pi(M)]$. The fact that the $x_1, x_2,\ldots, x_k$ are pairwise coprime implies that the ratios $\frac{x_k}{x_{\sigma(k)}}$ are all distinct. Critically, $\mathbf{x}^{(1)}$ is exactly the round-1 vector, so this instantiation is the natural fallback: if round 1 already localizes, the second pass is skipped; otherwise the second pass with $\mathbf{x}^{(2)}$ resolves the residual ambiguity via ratio recovery.

\paragraph{Instantiation 2: Consecutive integers (primary for $L > \pi(M)$).}
Let $\mathbf{x}^{(1)} = (1, 2, 3, \ldots, L)$ and $\mathbf{x}^{(2)} = (L, 1, 2, \ldots, L-1)$. The row pairs are $(1, L)$ and $(k, k-1)$ for $k = 2, \ldots, L$. Consecutive integers are coprime ($\gcd(k, k-1) = 1$ and $\gcd(1,L)=1$), and the $L$ ratios $L, 1/2, 2/3, \ldots, (L-1)/L$ are pairwise distinct. The construction requires $L \le M$ and uses no primes at all. This covers all practical array sizes at INT8 and above: $L \le 127$ at INT8 and $L \le 32767$ at INT16. For the slice between $M$ and the bound $|\mathcal{F}_M|$ in \eqref{eq:fareybound}, any enumeration of reduced fractions in $\mathcal{F}_M$ suffices.


\begin{example}[Ratio-based two round localization, signed INT4, $L=7$]\label{ex:gcd}
Consider $L=7$ rows with $b=4$ ($M=7$). Since $\pi(7) = 4 < 7$, one round is insufficient. 
\vspace{1mm}

\noindent Applying Instantiation 2: 
\vspace{1mm}

$\mathbf{x}^{(1)} = (1,2,3,4,5,6,7)^T$ and $\mathbf{x}^{(2)} = (7,1,2,3,4,5,6)^T$.

\vspace{1mm}
\noindent Row 1 receives the pair $(1,7)$; rows $k = 2, \ldots, 7$ receive $(k, k-1)$. The ratios are $7, 1/2, 2/3, 3/4, 4/5, 5/6, 6/7$ --- all distinct. Suppose row $i^\ast = 3$ has a fault with error $e = -2$.

\vspace{1mm}
\noindent\textbf{Round 1.}\;
$\Delta^{(1)} = e \cdot x_3^{(1)} = (-2) \cdot 3 = -6$.

\vspace{1mm}
\noindent\textbf{Round 2.}\;
$\Delta^{(2)} = e \cdot x_3^{(2)} = (-2) \cdot 2 = -4$.

\vspace{1mm}
\noindent\textbf{Ratio recovery.}\;
$\Delta^{(2)}/\Delta^{(1)} = (-4)/(-6) = 2/3$, which matches row $3$ uniquely (pair $(3,2)$). The error is recovered as $e = \Delta^{(1)}/x_3^{(1)} = -6/3 = -2$.
\end{example}

\subsection{Integration and Overhead}
\label{sec:overhead}

The localization scheme requires no new PE hardware or dedicated test datapaths.
A test pass is a standard matrix-vector multiply through the existing array:
the current weight tile stays resident in the PEs, the coprime sketch vector is
presented as a normal activation column, and the output is read back through the
standard output interface.
The golden checksum $\mathbf{c} = W_{\mathrm{gold}}\,\mathbf{x}$ is a
small per-tile precomputation stored alongside the weight tile in memory and
fetched on demand.
Post-processing runs on the array controller: compute $\boldsymbol{\Delta} =
\mathbf{s} - \mathbf{c}$, scan for the nonzero entry to find the faulty column $j^\ast$, then run $O(L)$ divisibility checks to locate the faulty row $i^\ast$.
This is $O(L\log M)$ bit operations in total: the $O(L)$ divisibility checks each cost $O(\log M)$ bit operations since every operand has $O(\log M)$ bits, keeping the total negligible relative to the array passes.

Deployment follows the same pattern as RunSAFER~\cite{vacca2023runsafer}:
the test is defined as an ISA instruction that queues the sketch vector(s)
into the activation pipeline alongside normal inference traffic.
Whether the trigger originates from an ISA opcode, a runtime scheduler, or a
host driver call is a deployment choice; the array-level operations are
identical in all cases. The ISA path is the natural production choice because it gives the controller deterministic timing.  Since the sketch
  vector is $1 \times L$, it matches the pipeline width exactly and is absorbed in one cycle with no stalls. As shown in Section~\ref{sec:cycleoverhead}, the incremental cost of issuing a sketch vector through an already-loaded array is exactly 1 cycle regardless of array size, making the overhead inherently negligible.

\section{Evaluation}
\label{sec:inst}

We evaluate FLARE along two axes: analytical characterization of
localization probability, and empirical validation via fault-injection
simulation.
For the analytical part we instantiate the $P_{\mathrm{fail}}$ bounds of
Theorem~\ref{thm:avg} for INT8 and INT16, comparing raw-prime and
prime-power coprime pool constructions.
For the empirical part we validate these bounds on a custom cycle-accurate
systolic array simulator and measure cycle overhead with
SCALE-Sim~3.0.0~\cite{raj2025scale}, reporting the incremental cost over a
baseline of one inference tile without the sketch vector.
Array dimensions are chosen to include the $128{\times}128$ and
$256{\times}256$ configurations of the Google TPU~\cite{jouppi2017tpu}
as reference deployment targets.

\subsection{Analytical Results}

We instantiate the $P_{\mathrm{fail}}$ bounds of Theorem~\ref{thm:avg} for INT16 and INT8, comparing the raw-prime and prime-power coprime pool constructions for each.
\subsubsection{INT16 ($b=16$, $M=32767$)}
\label{sec:int16}

At INT16 the prime pool is $\pi(32767) = 3512$, and $1612$ of these primes exceed $M/2$. Assigning the $L$ largest primes to the rows, the simplified bound $P_{\mathrm{fail}} \le (L-1)/M$ from Theorem~\ref{thm:avg} therefore applies for all $L \le 1612$, which covers the $128{\times}128$ and $256{\times}256$ array dimensions of the Google TPU~\cite{jouppi2017tpu}. For $L=256$ this evaluates to under $0.008$, and for $L=1024$ to under $0.032$. The single-round scheme (Algorithm~\ref{alg:single}) suffices at INT16 with no need for the prime-power construction or two round fallback developed for INT8 below. Under the single-bit-error model (Remark~\ref{rem:singlebit}), restricting to odd primes yields exact single-round localization.

\begin{figure*}[t]
    \centering
    \begin{subfigure}[t]{0.48\textwidth}
        \centering
        \includegraphics[width=\linewidth]{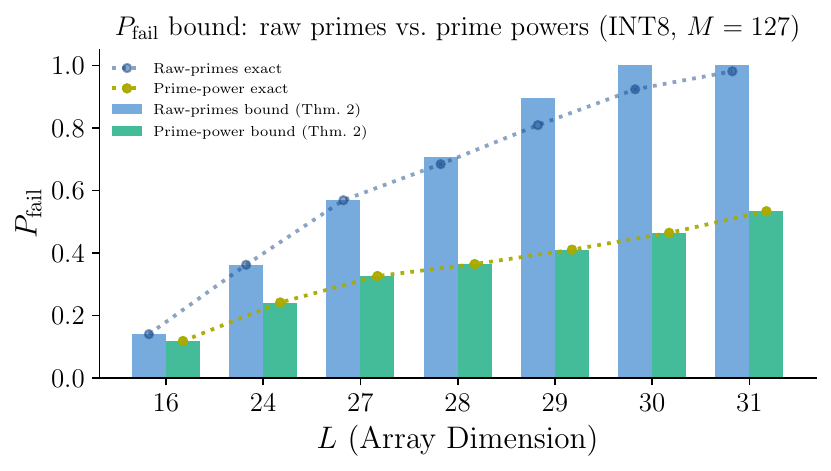}
        \caption{Average $P_{\mathrm{fail}}$ bound (Theorem~\ref{thm:avg}) for raw primes and prime powers. The prime-power exact value coincides with its bound, confirming no overcounting. The raw-prime exact value falls below its bound, showing the union bound overcounts when small primes are present.}
        \label{fig:primes_vs_powers_top}
        \vspace{-5mm}
    \end{subfigure}
    \hfill
    \begin{subfigure}[t]{0.48\textwidth}
        \centering
        \includegraphics[width=\linewidth]{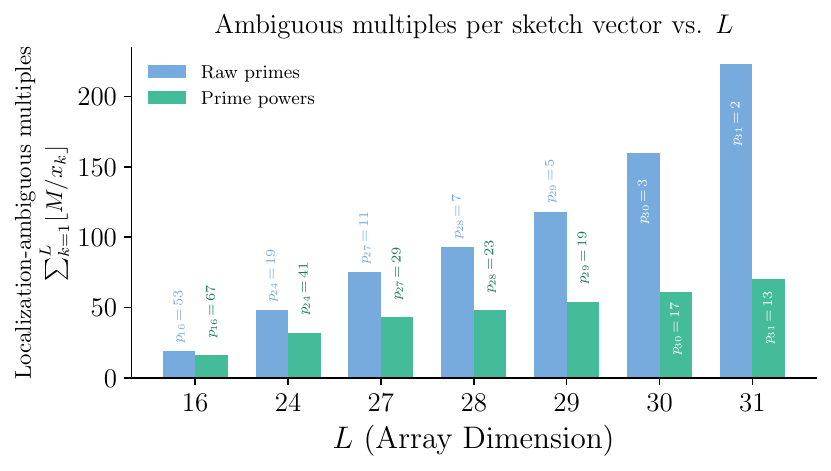}
        \caption{Ambiguous multiple count in the sketch vector across array dimensions $L$; $p_L$ is the additional entry added to the test vector at each $L$. The prime-power pool yields far fewer ambiguous multiples than raw primes, reducing false localization candidates and $P_{\mathrm{fail}}$ as seen above.}
        \label{fig:primes_vs_powers_bottom}
        \vspace{-5mm}
    \end{subfigure}
    \vspace{2mm}
    \caption{Raw primes vs.\ prime-power coprimes for INT8 ($M = 127$). Beyond $L=27$, small primes enter the raw-prime pool, raising the ambiguous multiple count; more error values become indistinguishable from the true fault, introducing false localization candidates and increasing $P_{\mathrm{fail}}$; the union bound also overcounts for raw primes, with the exact value falling below it.}
    \label{fig:primes_vs_powers}
    \vspace{-3mm}
\end{figure*}

\subsubsection{INT8 ($b=8$, $M=127$)}
\label{sec:int8}

The 31~primes $\le 127$ are
\vspace{0.5mm}
\[
\begin{gathered}  2,3,5,7,11,13,17,19,23,29,31,37,41,43,47,\\
  53,59,61,67,71,73,79,83,89,97,101,103,107,109,113,127.
\end{gathered}
\]

\paragraph{Bounded error model.}
Using all 31~primes, Theorem~\ref{thm:avg} gives the average $P_{\mathrm{fail}}$ via~\eqref{eq:avgfp}.
The 13~largest primes all exceed $M/2=63$; each contributes $\lfloor M/x_k\rfloor = 1$.
Evaluating~\eqref{eq:avgfp}:
\begin{center}\small
\begin{tabular}{lcc}
\toprule
$L$ & primes used & $P_{\mathrm{fail}}$ (Theorem~\ref{thm:avg})\\
\midrule
$8$ & $\{89,\ldots,127\}$ & {$\le 7/127 \approx 0.055$} \\
$16$ & $\{53,\ldots,127\}$ & {$\le 15/127 \approx 0.118$} \\
$31$ & all (prime-power) & {$\le 2100/3937 \approx 0.534$} \\
\bottomrule
\end{tabular}
\end{center}
\paragraph{Sharper construction: prime powers as coprime entries.}
One round construction of Section \ref{sec:single} and the corresponding analysis of Theorem~\ref{thm:avg} requires only that the test entries be \emph{pairwise coprime}, not that they be prime. This freedom can be exploited: each small prime $p$ can be replaced by the largest power $p^k \le M$ that fits in the error domain, without violating coprimality (since each power involves only one prime base). For $M=127$ the replacements are
\[
\begin{gathered}
2 \to 2^6=64,\; 3 \to 3^4=81,\; 5 \to 5^3=125,\;\\ 7 \to 7^2=49,\; 11 \to 11^2=121
\end{gathered}
\]
while primes $\ge 13$ remain unchanged. The resulting pool still has $\pi(127)=31$ entries, but the smallest entry rises from 2 to 13. The number of multiples in $\{1,\ldots,127\}$ collapses dramatically: where the original prime $2$ contributed 63 multiples, $64$ contributes only $\lfloor 127/64\rfloor = 1$. Similarly $81$, $125$, $121$ each contribute only one multiple, and $49$ contributes two. Substituting into Theorem~\ref{thm:avg} reduces $P_{\mathrm{fail}}$ for $L = 31$ from impractical (the small primes dominate the union bound) to $2100/3937 \approx 0.534$. Figure~\ref{fig:primes_vs_powers} shows the empirical impact: for a single sketch vector applied to INT8 ($M=127$), the prime power pool produces a noticeably lower average $P_{\mathrm{fail}}$ than the raw prime pool, with the gap widening beyond $L=27$ where the raw prime pool is forced to include the small primes (11, 7, 5, 3, 2) that dominate the union bound.

The union bound of Theorem~\ref{thm:avg} overcounts when $e$ is divisible by multiple pool entries, as with the raw-prime pool where entries like 2 and 3 share common multiples in $\{1,\ldots,M\}$; the exact $P_{\mathrm{fail}}$ (Appendix~\ref{app:exact}) corrects for this and falls strictly below the bound. The prime-power pool eliminates this overcounting: since no two entries share a common multiple in $\{1,\ldots,M\}$, the bound coincides with the exact $P_{\mathrm{fail}}$ computation for prime-power pool entries (Figures~\ref{fig:primes_vs_powers_top} and~\ref{fig:pfail_comp_combined}). Beyond tightness, the union bound evaluates in $O(L \log M)$ bit operations (Section~\ref{sec:overhead}), whereas the exact $P_{\mathrm{fail}}$ via inclusion-exclusion (Appendix~\ref{app:exact}) scales exponentially in $L$ and becomes impractical for large arrays. We therefore use the Theorem~\ref{thm:avg} bound with the prime-power pool throughout, as it matches the exact $P_{\mathrm{fail}}$ and remains a tight estimate across all tested dimensions.

\begin{figure*}[t]
    \centering
    \begin{subfigure}[t]{0.48\textwidth}
        \centering
        \includegraphics[width=\linewidth]{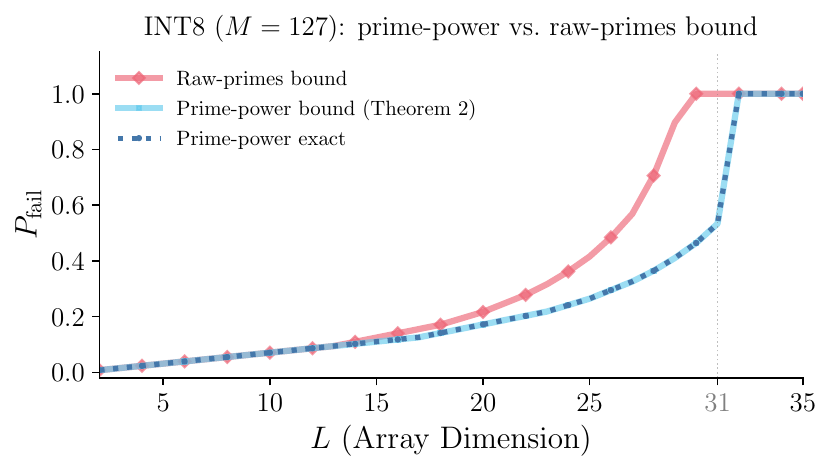}
        \caption{INT8  ($M = 127$, $\pi(127) = 31$): gap is most pronounced near $L = 31$, where the raw-prime pool is forced to include small primes with many multiples in $\{1,\ldots,M\}$.}
        \label{fig:pfail_comp_int8}
        \vspace{-5mm}
    \end{subfigure}
    \hfill
    \begin{subfigure}[t]{0.48\textwidth}
        \centering
        \includegraphics[width=\linewidth]{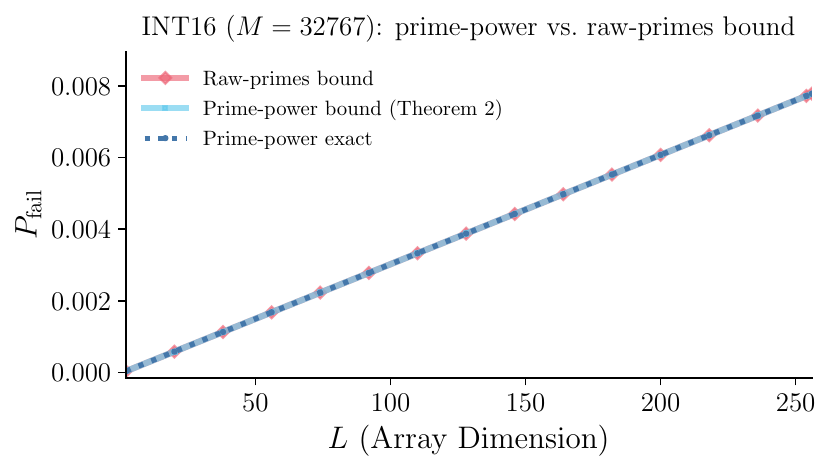}
        \caption{INT16 ($M = 32767$, $\pi(32767) = 3512$): all curves remain negligible for $L \le 256$, confirming single-round sufficiency.}
        \label{fig:pfail_comp_int16}
        \vspace{-5mm}
    \end{subfigure}
    \vspace{2mm}
    \caption{Prime-power exact (blue dotted), prime-power bound (cyan, Theorem~\ref{thm:avg}), and raw-primes bound (rose) for INT8 and INT16. The prime-power exact and bound coincide in both cases, confirming no overcounting.}
    \label{fig:pfail_comp_combined}
    \vspace{-3mm}
\end{figure*}

\paragraph{Production-scale arrays ($L\in\{128,256\}$).}
For $L > 31 = \pi(127)$, the one round scheme cannot assign a unique coprime entry to every row. The ratio-based two round scheme (Section~\ref{sec:gcd}) provides exact localization for almost all practical INT8 array sizes, since $|\mathcal{F}_{127}| \approx 9{,}800$ (Eq.~\ref{eq:fareybound}) far exceeds most production array dimensions.

\paragraph{Single-bit-error model.}
By Remark~\ref{rem:singlebit}, using the 30~odd primes ($3$ through~$127$) guarantees exact localization in one round for any error $e=\pm 2^\beta$.

\vspace{-2mm}
\paragraph{Cross-precision comparison.}
Precision has a large effect on localization reliability: the pool size difference ($\pi(127)=31$ vs.\ $\pi(32767)=3512$) means that, at any given array dimension, INT16 achieves substantially lower $P_{\mathrm{fail}}$ than INT8. Concretely, INT8 reaches $P_{\mathrm{fail}} \approx 0.534$ at its pool boundary ($L=31$), while INT16 remains below $0.008$ even at $L=256$, as shown in Figures~\ref{fig:pfail_comp_int8} and~\ref{fig:pfail_comp_int16}.

\subsection{Empirical Results}
\label{sec:experiments}

Experiments are run on a custom Python cycle-accurate simulator using weights from the \path{model.layers.0.mlp.down\_proj} tensor of Qwen2.5-0.5B~\cite{yang2025qwen3}. We also measure $P_{\mathrm{fail}}$ empirically across array dimensions, and characterize the cycle cost of the sketch vector pass for both the one-round and two-round schemes using SCALE-Sim~3.0.0~\cite{raj2025scale}.

\subsubsection{Empirical Validation of $P_{\mathrm{fail}}$}

For each array dimension $L$ and precision, a fresh $L$-row weight-stationary systolic array is instantiated and loaded with the first $L$ rows of \path{model.layers.0.mlp.down\_proj} of Qwen2.5-0.5B~\cite{yang2025qwen3} (full tensor shape $896{\times}4864$): for INT8, weights are taken from the W8A8 quantized version \path{neuralmagic/Qwen2.5-0.5B-Instruct-quantized.w8a8}, stored as signed 8-bit integers; for INT16, from the BF16 version \path{Qwen/Qwen2.5-0.5B-Instruct}, quantized to signed 16-bit integers.\footnote{The weights do not enter the $P_{\mathrm{fail}}$ computation: localization is determined entirely by the syndrome $x_{i^\ast} \cdot e$. Therefore, only the precision of the random error $e$ and the sketch vector affect $P_{\mathrm{fail}}$, through the magnitude range $\{1,\ldots,M\}$ they induce, and not the precision or values of the weight matrix. This was validated using the custom cycle-accurate Python simulator.} For each of the 500 trials, the fault row $i^\ast\!\in\!\{0,\dots,L{-}1\}$, fault column $j^\ast\!\in\!\{0,\dots,K{-}1\}$, and error magnitude $e\!\in\!\{1,\dots,M\}$ are sampled uniformly at random; the sketch vector is constructed from the prime-power pool for the corresponding precision (Section~\ref{sec:single}). An additive fault of size $e$ is injected into the \textsf{Wreg} of PE $(i^\ast,j^\ast)$, the prime-power coprime sketch vector is passed through the faulty array, and the candidate row set is recovered from the output syndrome. 

Localization is incomplete when, in addition to $i^\ast$, at least one other row appears in the candidate set of Algorithm~\ref{alg:single}. The empirical failure rate is compared against the Theorem~\ref{thm:avg} prime-power bound. We sweep $L\in\{4,8,16,31,40,50,60\}$ for INT8 and $L\in\{4,8,16,32,64,128,256\}$ for INT16.

Figures~\ref{fig:empirical_pfail_int8} and~\ref{fig:empirical_pfail_int16} report results for the INT8 ($M=127$) and INT16 ($M=32767$) error domains respectively. The shaded band around each simulation curve shows the 95\% confidence interval~\cite{wilson1927} for the 500-trial sample. For INT8, the empirical rate tracks the Theorem~\ref{thm:avg} bound closely across all tested dimensions; for $L>31=\pi(127)$ the prime-power pool is exhausted and both the Theorem~\ref{thm:avg} bound and the exact $P_{\mathrm{fail}}$ saturate to~1 analytically, though the empirical failure rate degrades more gradually (discussed below), motivating the two-round scheme of Section~\ref{sec:gcd}. Occasional crossings where the empirical estimate slightly exceeds the bound are expected: the bound is a guarantee on the true average $P_{\mathrm{fail}}$, while the empirical estimate from a finite number of trials is subject to sampling noise, so small upward fluctuations do not contradict the theoretical guarantee. For INT16 the failure rate is negligible across all practical array sizes, confirming that the prime-power construction provides reliable single-round localization up to $256{\times}256$ arrays at higher precision.

\paragraph{Graceful degradation beyond the pool boundary.}
A closer look at the INT8 curve reveals that the empirical $P_{\mathrm{fail}}$ remains noticeably below~1 for $L$ slightly above~31. Both the Theorem~\ref{thm:avg} union bound and the exact $P_{\mathrm{fail}}$ (Appendix~\ref{app:exact}) saturate to~1 as soon as $L > \pi(127) = 31$ because entry/prime-factor repetition is unavoidable, but repetition does not affect every row equally: for $L = 31 + k$, exactly $k$ rows must share a prime-power entry with another row, while the remaining~$31-k$ rows still carry unique entries. Since the fault location $i^\ast$ is drawn uniformly over all $L$ rows, the probability that it falls on a non-repeated row is {$(31-k)/L$}, and single-round localization succeeds in those cases. The empirical $P_{\mathrm{fail}}$ therefore rises by at most {$2k/L$} due to entry repetition, rather than jumping immediately to~1. {Concretely, for $L = 32$ ($k = 1$), exactly $2/32$ of fault locations fall on shared-entry rows and fail with certainty; the remaining $30/32$ still benefit from single-round localization.} This graceful degradation is practically useful: INT8 arrays moderately larger than~31 rows can still benefit from single-round FLARE localization for the majority of fault locations, and the two-round fallback is only strictly necessary once enough rows carry repeated entries to meaningfully raise the overall failure rate.

\begin{figure*}[t]
    \centering
    \begin{subfigure}[t]{0.48\textwidth}
        \centering
        \includegraphics[width=\linewidth]{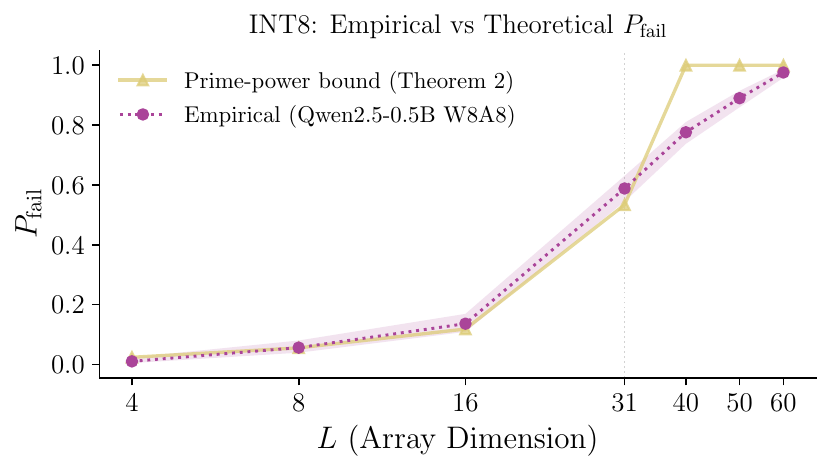}
        \caption{INT8 ($M = 127$): For $L > 31 = \pi(127)$ the prime-power pool is exhausted and the analytical $P_{\mathrm{fail}}$ saturates to~1, though the empirical rate rises gradually as non-repeated entries continue to localize successfully for sizes moderately beyond the boundary.}
        \label{fig:empirical_pfail_int8}
        \vspace{-5mm}
    \end{subfigure}
    \hfill
    \begin{subfigure}[t]{0.48\textwidth}
        \centering
        \includegraphics[width=\linewidth]{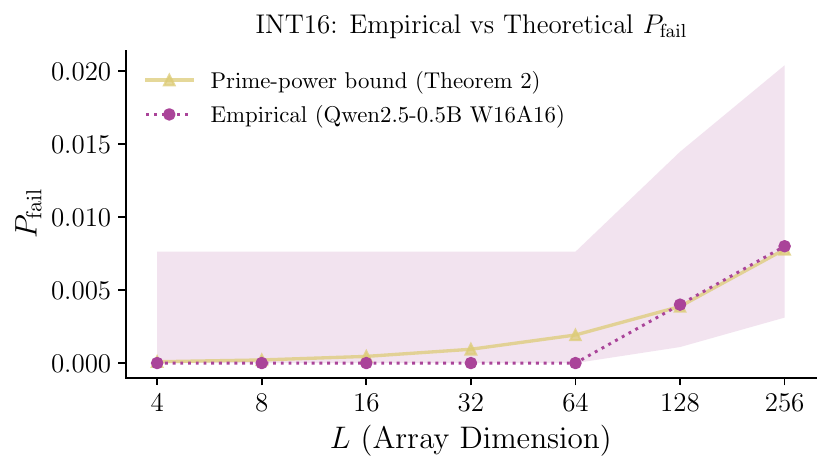}
        \caption{INT16 ($M = 32767$): The failure rate remains negligible across all tested dimensions, consistent with the Theorem~\ref{thm:avg} bound.}
        \label{fig:empirical_pfail_int16}
        \vspace{-5mm}
    \end{subfigure}
    \vspace{2mm}
    \caption{Empirical $P_{\mathrm{fail}}$ versus array dimension $L$ for INT8 and INT16 with Theorem~\ref{thm:avg} prime-power bound (500 trials; Qwen2.5-0.5B weights: W8A8 for INT8, BF16 quantized to INT16 for INT16); shaded regions denote 95\% confidence bands~\cite{wilson1927}.}
    \label{fig:empirical_pfail_combined}
    \vspace{-3mm}
\end{figure*}

\vspace{-1mm}
\subsubsection{Cycle Overhead}
\label{sec:cycleoverhead}

We measure the cycle cost of issuing the sketch vector(s) relative to a standard inference tile, for both the one-round and two-round schemes across array dimensions.
In the ISA-integrated deployment model (Section~\ref{sec:overhead}), the weights
  for the tile under test are already resident in the PEs when the sketch vector
  is issued.
  The relevant overhead is therefore the \emph{incremental} cost of issuing the
  sketch vector through the already-loaded array.
  The sketch vector is $1 \times L$, one element per PE column of the
  $L \times L$ array, making a sketch pass a GEMM of shape
  $(L \times L) \times (L \times 1)$.

  We measure this using SCALE-Sim~3.0.0~\cite{raj2025scale}.
  The Baseline column of Table~\ref{tab:cycle_overhead} reports the cycle count for one inference tile (including weight loading) as measured by SCALE-Sim; this is subtracted from the total cycle count with the sketch vector to isolate the test overhead alone.
  For the two round scheme both sketch vectors are issued back-to-back against
  the same baseline.

  Table~\ref{tab:cycle_overhead} reports the results.
  The sketch vector has the same length as the array column count and the pipeline absorbs each additional vector
  in exactly \textbf{1 cycle}, regardless of array size.
  The two round scheme therefore costs 2 cycles total.
  At $128{\times}128$ and $256{\times}256$ the single-round overhead is $0.04\%$ and $0.01\%$ respectively; the two-round overhead is $0.08\%$ and $0.03\%$.
  The post-pipeline detection work (syndrome subtraction and $O(L)$ divisibility checks) runs asynchronously on the adjacent array controller and can overlap with subsequent tile computation. The fault-identification latency is $O(L)$ controller clock cycles; whether this overlaps fully with array execution depends on the controller's arithmetic throughput, which is implementation-specific and not characterized here. The reported cycle counts therefore reflect only the array-side cost of issuing the sketch vector(s).

  \begin{table}[t]
    \centering
    \caption{Incremental cycle overhead of FLARE on an $L \times L$
             Weight-Stationary array, measured with SCALE-Sim~3.0.0~\cite{raj2025scale}.
             The sketch vector is $1 \times L$;
             Util.\,(\%) is the PE utilization of the baseline run (fraction of PE-cycles
             actively performing a MAC, as reported by SCALE-Sim);
             Overhead\,\% $= \Delta\text{cycles} / \text{baseline} \times 100$. \vspace{-5pt}}
    \label{tab:cycle_overhead}
    \small
    \setlength{\tabcolsep}{5pt}
    \begin{tabular}{rrrrrrr}
      \toprule
      $L$ & Baseline & Util.\,(\%) &
      \multicolumn{2}{c}{1-round} &
      \multicolumn{2}{c}{2-round} \\
      \cmidrule(lr){4-5}\cmidrule(lr){6-7}
      & & & cyc. & \% & cyc. & \% \\
      \midrule
        8 &  554 & 3.45 & 1 & 0.18 & 2 & 0.36 \\
       16 &  586 & 1.64 & 1 & 0.17 & 2 & 0.34 \\
       32 &  650 & 0.80 & 1 & 0.15 & 2 & 0.31 \\
       64 &  778 & 0.40 & 1 & 0.13 & 2 & 0.26 \\
      128 & 2606 & 0.20 & 1 & 0.04 & 2 & 0.08 \\
      256 & 7836 & 0.10 & 1 & 0.01 & 2 & 0.03 \\
      \bottomrule
    \end{tabular}
    \vspace{-5mm}
  \end{table}
\vspace{-5pt}

\section{Conclusion}
\label{sec:conclusion}
We presented FLARE, a lightweight algorithmic framework for PE-level fault localization in systolic arrays without hardware redundancy. By leveraging pairwise coprime test vectors, FLARE encodes fault location into algebraic divisibility signatures that survive systolic accumulation, overcoming the fundamental limitation of prior approaches that only achieve column-level detection. We show that a single test pass localizes faults with miss probability bounded by Theorem~\ref{thm:avg} under a general bounded error model, while a two round ratio-based extension guarantees exact localization for all practical array sizes. Importantly, FLARE requires no architectural modifications and integrates seamlessly with existing execution pipelines. Experimental results demonstrate near-perfect localization for INT16 and strong guarantees for INT8, with test overhead under 1\% of a single GEMM tile, making it practical for in-field deployment. Overall, FLARE highlights the potential of algorithm-hardware co-design for low-cost, software-defined fault diagnosis in AI accelerators.

\begin{acks}
We acknowledge NSF awards \#2506573 and \#2516418 for this work. We acknowledge Prof.\ Tushar Krishna at Georgia Tech ECE for helpful discussions. We also thank Ajay Sharma Mandadi at Georgia Tech for valuable feedback throughout this work.
\end{acks}

\bibliographystyle{unsrt}
\bibliography{references}

\appendix

\section{Exact Probability via Inclusion-Exclusion}
\label{app:exact}

The bound of Theorem~\ref{thm:avg} applies the union bound, summing $\lfloor M/x_k \rfloor$ over all rows $k \neq i^\ast$. This can overcount when $e$ is simultaneously divisible by more than one pool entry -- for example, $e=6$ is divisible by both $x=2$ and $x=3$, thus counted twice, even though it corresponds to a single failure event. The exact average $P_{\mathrm{fail}}$ corrects for this via the inclusion-exclusion principle.

\medskip
\noindent\textbf{Setup.}\; Localization is incomplete when at least one row $k \neq i^\ast$ survives the divisibility test. The syndrome at the faulty column is $\Delta = e \cdot x_{i^\ast}$, so row $k$ passes the test when $x_k \mid e \cdot x_{i^\ast}$; since pool entries are pairwise coprime ($\gcd(x_k, x_{i^\ast}) = 1$ for $k \neq i^\ast$), this simplifies to $x_k \mid e$. Since $e$ is bounded by the precision range $\{1,\ldots,M\}$, the failure-inducing values are precisely the multiples of $x_k$ within this range: the errors that render row $k$ indistinguishable from $i^\ast$.

\smallskip
\noindent Let $A_k = \{e \in \{1,\ldots,M\} : x_k \mid e\}$ be the set of such errors for row $k$. Taking the union over all rows $k \neq i^\ast$ collects all errors that would cause at least one wrong row to survive alongside the true faulty row. For a fixed faulty row $i^\ast$:

\smallskip
\begin{equation}
  P_{\mathrm{fail}}(i^\ast) = \frac{1}{M}\left|\bigcup_{k \neq i^\ast} A_k\right|.
\end{equation}

\medskip
\noindent\textbf{Inclusion-exclusion formula.}\; Let $S$ denote a non-empty subset of non-faulty row indices drawn from $\{1,\ldots,L\}\setminus\{i^\ast\}$; each such subset represents a combination of rows that would simultaneously pass the divisibility test, and its contribution counts the errors $e \le M$ divisible by all entries $x_k$ for $k \in S$, i.e., $\left|\bigcap_{k\in S} A_k\right| = \lfloor M/\prod_{k\in S} x_k \rfloor$. Summing with alternating signs over all such subsets gives the exact union cardinality:
\begin{equation}\label{eq:exact}
  \left|\bigcup_{k \neq i^\ast} A_k\right|
  = \sum_{j=1}^{L-1}(-1)^{j+1}
    \sum_{\substack{S \subseteq \{1,\ldots,L\}\setminus\{i^\ast\} \\ |S|=j}}
    \left\lfloor \frac{M}{\prod_{k \in S} x_k} \right\rfloor.
\end{equation}
Averaging~\eqref{eq:exact} uniformly over $i^\ast \in \{1,\ldots,L\}$ gives the average exact $P_{\mathrm{fail}}$, plotted as \emph{exact} in Figures~\ref{fig:primes_vs_powers} and~\ref{fig:pfail_comp_combined}.

\medskip
\noindent\textbf{Prime-power pool.}\; For the prime-power pool, all pairwise products $x_j x_k$ exceed $M$, so no $e \le M$ is simultaneously divisible by two distinct entries and overcounting cannot occur. Consequently, formula~\eqref{eq:exact} collapses to the union bound of Theorem~\ref{thm:avg}: the bound and exact value coincide, justifying its use throughout the paper.

\medskip
\noindent\textbf{Computational complexity.}\; The summation in~\eqref{eq:exact} ranges over all non-empty subsets of $\{1,\ldots,L\}\setminus\{i^\ast\}$, where every subset is formed by choosing elements in all possible combinations. Since the faulty row $i^\ast$ is excluded from subset formation, there are $L-1$ non-faulty rows to choose from. The number of subsets of size $j$ from these $L-1$ rows is $\binom{L-1}{j} = \frac{(L-1)!}{j!\,(L-1-j)!}$, and the total number of non-empty subsets across all sizes is $\sum_{j=1}^{L-1}\binom{L-1}{j} = 2^{L-1}-1$ by the binomial theorem, which grows exponentially in $L$.

\medskip
\noindent For each subset $S$ of size $j$, computing the product $\prod_{k\in S}x_k$ costs $O(j\log M)$ bit operations and the subsequent floor division costs $O(\log M)$. Averaging over all $L$ choices of $i^\ast$ multiplies the total cost by $L$, giving $O(L \cdot 2^{L} \cdot \log M)$ bit operations overall. In contrast, the Theorem~\ref{thm:avg} union bound requires only $O(L)$ floor divisions totalling $O(L\log M)$, independent of the subset structure. For large $L$ the exact formula is therefore computationally prohibitive, further supporting the use of the union bound in practice.

\end{document}